\newtheorem*{defn}{Definition}
\newtheorem*{mydef}{Theorem}
\begin{document}\sloppy

\title{Stochastic Contracts for Runtime Checking of Component-based Real-time Systems}
%
% You need the command \numberofauthors to handle the 'placement
% and alignment' of the authors beneath the title.
%
% For aesthetic reasons, we recommend 'three authors at a time'
% i.e. three 'name/affiliation blocks' be placed beneath the title.
%
% NOTE: You are NOT restricted in how many 'rows' of
% "name/affiliations" may appear. We just ask that you restrict
% the number of 'columns' to three.
%
% Because of the available 'opening page real-estate'
% we ask you to refrain from putting more than six authors
% (two rows with three columns) beneath the article title.
% More than six makes the first-page appear very cluttered indeed.
%
% Use the \alignauthor commands to handle the names
% and affiliations for an 'aesthetic maximum' of six authors.
% Add names, affiliations, addresses for
% the seventh etc. author(s) as the argument for the
% \additionalauthors command.
% These 'additional authors' will be output/set for you
% without further effort on your part as the last section in
% the body of your article BEFORE References or any Appendices.

\numberofauthors{4} %  in this sample file, there are a *total*
% of EIGHT authors. SIX appear on the 'first-page' (for formatting
% reasons) and the remaining two appear in the \additionalauthors section.
%
\author{
% You can go ahead and credit any number of authors here,
% e.g. one 'row of three' or two rows (consisting of one row of three
% and a second row of one, two or three).
%
% The command \alignauthor (no curly braces needed) should
% precede each author name, affiliation/snail-mail address and
% e-mail address. Additionally, tag each line of
% affiliation/address with \affaddr, and tag the
% e-mail address with \email.
%
% 1st. author
\alignauthor
Chandrakana Nandi\\
	   \affaddr{Department of Computer Science, ETH Zurich}\\
       \email{chandrakana.nandi@inf.ethz.ch}\\
% 2nd. author
\and
\hspace{.2cm}
\alignauthor
Aurelien Monot\\
       \affaddr{ABB Corporate Research}\\
       \email{aurelien.monot@ch.abb.com} \\
\and 
\\
\alignauthor Manuel Oriol\\
     \affaddr{ABB Corporate Research}\\
       \email{manuel.oriol@ch.abb.com}% use '\and' if you need 'another row' of author names
% 4th. author
% use '\and' if you need 'another row' of author names
}
% There's nothing stopping you putting the seventh, eighth, etc.
% author on the opening page (as the 'third row') but we ask,
% for aesthetic reasons that you place these 'additional authors'
% in the \additional authors block, viz.
% Just remember to make sure that the TOTAL number of authors
% is the number that will appear on the first page PLUS the
% number that will appear in the \additionalauthors section.
\toappear{}
\maketitle
\begin{abstract}
This paper introduces a new technique for dynamic verification of component-based real-time systems based on statistical inference. Verifying such systems requires checking two types of properties: functional and real-time. For functional properties, a standard approach for ensuring correctness is Design by Contract: annotating programs with executable pre- and postconditions. We extend contracts for specifying real-time properties. 

In the industry, components are often bought from vendors and meant to be used off-the-shelf which makes it very difficult to determine their execution times and express related properties.\ We present a solution to this problem by using statistical inference for estimating the properties. The contract framework allows application developers to express contracts like ``the execution time of component $X$ lies within $\gamma$ standard deviations from the mean execution time''. 
Experiments based on industrial case studies show that this framework can be smoothly integrated into existing control applications, thereby increasing their reliability while having an acceptable execution time overhead (less than 10\%).  
\end{abstract}

\category{F.3.1}{Logics and Meanings of Programs}{Specifying and Verifying and Reasoning about Programs}
\category{G.3}{ Probability and Statistics}{Distribution functions}

\keywords{Design by Contract, runtime verification, real-time systems, statistical inference, component-based engineering}

\section{Introduction}
A real-time system is one for which respecting deadlines is a part of the system specification. These systems are widely used in developing safety critical applications~\cite{Reference57} such as control systems for nuclear power plants, safety systems of automobiles, and avionics. Verifying such systems ensures that their behaviors satisfy the specifications. Among others, component-based engineering is an approach for developing real-time safety critical systems~\cite{Reference53}, \cite{Reference54}, \cite{Reference55} and a lot of research has been done in verifying these systems~\cite{Reference58}, \cite{Reference59}, \cite{Reference61}, \cite{Reference62}, \cite{Reference63}. Most of these works are based on model checking. Another approach which is better suited for verifying component-based systems due to its support for modular abstraction is \textit{Design by Contract}~\cite{Reference6} (annotating programs with executable pre- and postconditions) for specifying functional and real-time properties~\cite{Reference64}, \cite{Reference66}, \cite{Reference67}, \cite{Reference68} which can be verified using static or dynamic techniques. 

A major challenge in specifying contracts for component-based real-time systems lies in expressing the real-time properties.\ The existing work in this area focuses mainly on checking very simple timed properties which may not be sufficient for representing the behavior of a system. In practice, real-time contracts for component-based systems should be able to account for all real-time aspects such as availability of resources like CPU and memory, exception handling during the execution of a component and interactions with other components. Another problem lies in reasoning about the Worst Case Execution Times (WCETs) of the components. Static analysis is a possible approach but it is not easy to design and implement a static analyzer that can cater to the needs of a particular industrial component-based system. Even if one does so, it would not be reusable for other such systems. Moreover, static analysis may not be feasible when the components are bought from vendors and the source code is not available~\cite{Reference46}. Another approach is to use probabilistic analysis for determining the probability density function of the execution times. This is not applicable to all systems because the execution time distribution depends on the behavior of a particular component. The results for one component can not be generalized to all components. 

In this paper we present solutions to the problems of expressing and checking real-time properties by incorporating empirical stochastic analysis in the contracts: instead of determining the exact WCET values of the components or predicting the execution time probability distributions, we \textit{estimate} the WCET values based on empirical distribution functions and then specify the contracts. Our contract framework allows us to specify properties like `` \textbf{Execution time of component $X$ should lie within $\gamma$ standard deviations from the mean execution time of $X$}'' and check them dynamically. The mean and the standard deviation of the execution times are estimated by statistical inference and the parameter $\gamma$ is a property of a particular component. 

We validated our approach on a state-of-the-art component-based framework, FASA (Future Automation System Architecture)~\cite{Reference1} developed at ABB Corporate Research. 
Our experiments are based on industrial case studies and show that for applications deployed on a single host controller, the average execution time overhead added due to the contracts is less than 10\%, which makes it efficient and easy to incorporate on top of any component-based real-time application. \\

\section{Overview}
\subsection{Motivating Example}
\label{subsec:mtvtn}
We consider a typical control loop with a combination of a feed-forward and a feedback controller as our running example. Figure~\ref{fig:runningeg} shows the schematic diagram for this system.  
\begin{figure}[htbp]
\centering 
\includegraphics[width=.6\columnwidth]{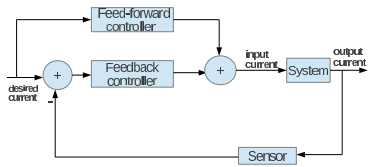} 
\caption[A control system]{A control system with a combination of an open loop and a closed loop controller}
\label{fig:runningeg} 
\end{figure} 
The application has a component, Sensor which measures the value of a physical quantity, in our case electric current generated by System. The feedback loop in our example is based on negative feedback because the output current measured by Sensor is subtracted from the desired current value to generate an error. This error is used by the component, Feedback controller to dynamically regulate the output of the System. The other component is a Feed-forward controller. This controller does not measure the output current generated by System and as a result, the latter has no effect on it. The net effect on the input to System is the sum of the effects of the two controllers. In practice, a combination of a feed-forward and a feedback controller is often used for ensuring faster convergence to the desired value of the physical quantity being measured. 

\subsection{FASA}
FASA (Future Automation System Architecture) is a component based framework developed at ABB Corporate Research~\cite{Reference1} used for \textit{cyclic} real time control applications. A FASA application is made up of \textit{function blocks} (components) which communicate with each other through \textit{ports}. Ports can be input or output. A function block receives data at an input port and sends out data through an output port. An input port is connected to an output port through a unidirectional \textit{channel}. FASA has a centralized \textit{scheduler} which is responsible for scheduling the function blocks on the available controllers. The execution of an application is triggered by the launching of the FASA \textit{kernel} which is the main entry point to the framework. The function blocks are implemented as C++ classes and have a dedicated method, \texttt{operator()} where their behavior is described. 

Referring to our running example, the Feed-forward controller, the Feedback controller, the Sensor, the two Addition Components (shown in Figure~\ref{fig:runningeg} with $+$ sign) and the System are FASA function blocks. The arrows indicate the direction of data flow through the channels.

\subsection{Types of Contracts}
Our contract framework facilitates the specification and monitoring of functional and real-time properties. For the functional specifications, it supports preconditions, postconditions and class invariants. It supports loop invariants and loop variants for checking the correctness of loops and the \texttt{old} construct as used in some languages such as Eiffel~\cite{Reference6}, which specifies properties about the state of an object. 
Figure~\ref{fig:funct_sens} is a code snippet showing a functional contract for Sensor from our running example. Sensor has an attribute \texttt{interval} which records the current cycle number. In every cycle, this value is incremented by one. Before every update, the current value of \texttt{interval} is saved using the \texttt{OLD} macro. In the postcondition, the old value is retrieved using another macro, \texttt{GET_OLD} which takes two arguments, the type and the attribute name. The postcondition states that the value must be correctly updated in each cycle.

\begin{figure} [htb]
\centering 
\includegraphics[trim={0 20cm 0 0},width=1\columnwidth]{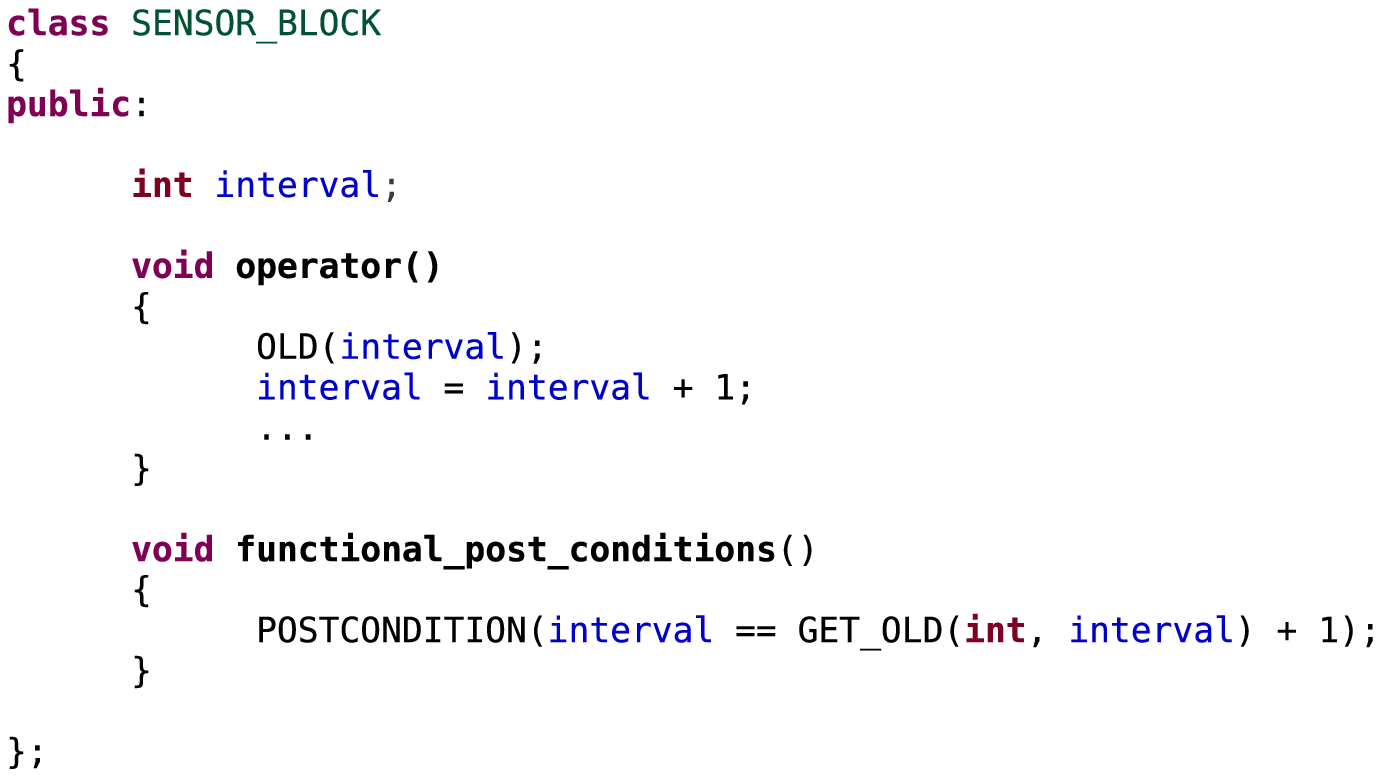} 
\caption{Functional contract for monitoring the state of a class attribute for Sensor from the running example}
\label{fig:funct_sens} 
\end{figure}
                     
For the real-time specifications, the properties supported by our framework are related to:
\begin{enumerate}
\setlength{\itemsep}{-1.5pt}
\item Execution times of the function blocks.
\item Cycle times of the control applications: our target applications are cyclic in nature and since they are safety critical, they must not violate the desired cycle time.
\item Jitter margin: applications must not exceed the maximum deviation from the cycle time that can be tolerated. This upper limit is called \textit{jitter margin}. 
\end{enumerate}
The cycle time and jitter margin for industrial component-based systems are usually predefined by the control engineer. The execution time analysis of the individual components is very difficult due to several reasons.  First, developing a comprehensive static analyzer for WCET analysis which can be used for any component-based system is very difficult. Second, components are often bought from vendors and used off-the-shelf due to which their source code may not be available. Third, some existing probabilistic approaches try to model the execution times of all components using a particular distribution function; we executed $24$ different function blocks from $11$ control applications and found that upon executing the blocks for 1000 cycles or more, the distributions converge to a single peak but are very different from one another. As a result, it is not possible to fix a standard probability density function a-priori to model the real-time behavior of all the function blocks. To overcomes these challenges, instead of using static analysis or a particular probability distribution function, we statistically \textit{estimate} the execution time upper bounds and specify contracts using these estimates. An example of a real-time contract for a component in our approach reads like ``Execution time of component $X$ should lie within $\gamma$ standard deviations from the mean execution time of $X$", where the execution time upper bound is computed as a function of the estimates of the mean and the standard deviation. Figure~\ref{fig:rt_sens} shows this contract for Sensor.  Our approach can be used with any component-based system and its performance is independent of the size of the application (in terms of the number of components).
\begin{figure} [htb]
\centering 
\includegraphics[width=.6\columnwidth]{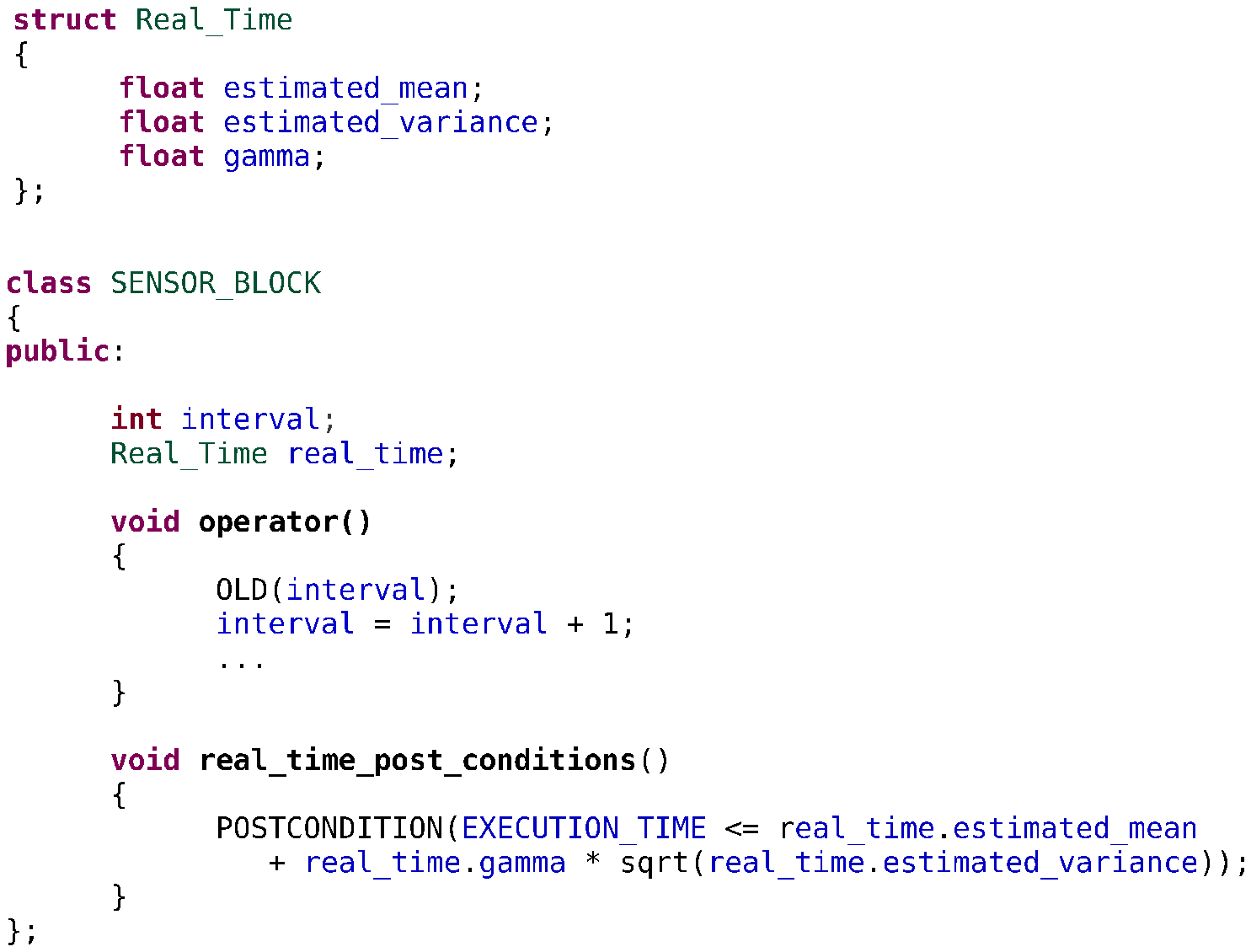} 
\caption{Stochastic real-time contract for Sensor from the running example}
\label{fig:rt_sens} 
\end{figure}
  
\section{Stochastic Contracts}\label{Section3}
We use statistical inference for analysing the execution times of the function blocks and specifying related contracts: we estimate the WCET values based on the empirical cumulative distribution functions (cdfs) of the execution times. 
By definition, the cdf of a random variable X is given by:
\begin{equation}\label{eq:cdf}
F(x)=P[X\leqslant x]
\end{equation} 
where $P[X\leqslant x]$ is the probability that the random variable X has value less than or equal to $x$. Let $\mu$ and $\sigma$ represent the mean and standard deviation of the unknown probability density of the execution time of a function block. Using equation~\eqref{eq:cdf} we can find an upper bound $\tau_\pi$ on the execution time such that $P[X \leq \tau_\pi]=\pi$ where $\pi \in [0, 1]$.
Now, $\tau_\pi$ can be expressed as a function of $\mu$ and $\sigma$ as:
\begin{equation}
\tau_\pi = \mu + \gamma \sigma \label{eq:pop}
\end{equation}  
where $\gamma$ gives us the distance of ${\tau_\pi}$ from the mean, $\mu$, in terms of number of standard deviations, $\sigma${\scriptsize{s}}. 
Every function block is executed for $n$ cycles and the execution times are recorded. Let $e_1, e_2, e_3...e_n$ denote this sample. The sample data is used for determining the empirical cdf. Figure~\ref{fig:energy_pack_cdf} shows the empirical cdf of our Sensor, based on 1000 cycles. Next, the mean, $\mu_n$, and the standard deviation, $s_n$ of the sample are computed as: \\
\begin{equation*}
{\mu}_n =  \frac{\sum_{i=1}^{n} e_i}{n} 				
\end{equation*}
\begin{equation*}
s_n=\sqrt{\frac{\sum^n_{i=1}{(e_i-{\mu}_n)}^2}{n-1}}
\end{equation*}

\begin{figure}
\centering 
\includegraphics[width=.6\columnwidth]{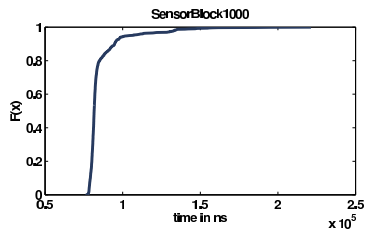} 
\caption[Empirical cumulative distribution functions]{Empirical cumulative distribution function of the Sensor from the running example based on sample size of 1000}
\label{fig:energy_pack_cdf} 
\end{figure}

Using equation~\eqref{eq:cdf} our framework \textit{estimates} an upper bound, $\widehat{\tau_\pi}$ on the execution time, if a threshold probability $0 \leq \pi \leq 1$ is given  by the application developer. For example, if the developer sets $\pi$ to 0.99, then $\widehat{\tau_\pi}$ is computed so that 99\% of the execution times lies within $\widehat{\tau_\pi}$ from the $(n+1)^{th}$ cycle onward. Thus, the accuracy of the bound depends on $\pi$.
Similar to equation~\eqref{eq:pop}, $\widehat{\tau_\pi}$ can be expressed as a function of $\mu_n$ and  $s_n$ as:
\begin{equation}
\widehat{\tau_\pi} = \mu_{n} + \gamma s_{n} \label{eq:sample}
\end{equation} 
The reason why we express the bound as a function of $\mu_n$ and $s_n$ is that it makes dynamic updating of the bound (as explained later) faster. The framework computes $\gamma$ only once and treats it as a constant while updating $\mu_n$ and $s_n$ and re-estimating the bound. 
$\widehat{\tau_\pi}$ , $\mu_n$  and $s_n$ are \textit{unbiased estimators}~\cite{hogg2005introduction} of $\tau_\pi$, $\mu$ and $\sigma$ respectively. This means that the expected value of $\widehat{\tau_\pi}$ is equal to $\tau_\pi$ and similarly for the other two parameters. Following is the formal definition of this notion.

\begin{defn}
Let $e_1, e_2, e_3...e_n$ be a random sample drawn from a population and let $\theta$ be an unknown parameter of the population which is to be estimated. Let $\widehat{\theta}=\upsilon(e_1, e_2, e_3...e_n)$ be a function of the sample. By definition, $\widehat{\theta}$ is an unbiased estimator of $\theta$ if the expected value of $\widehat{\theta}$ is equal to $\theta$, i.e. 
\hspace*{3.5cm}$E[\widehat{\theta}]=\theta $\qed
\end{defn}
We use unbiased estimators because they have desirable properties~\cite{hogg2005introduction}. The proof of the unbiasedness of $\mu_n$ and $s_n$ are standard results in statistical inference. They are also provided in \cite{EPFL-STUDENT-203686}. We use them in the following theorem.

\begin{mydef}
For any given function block, $\widehat{\tau_\pi}$ is an unbiased estimator of the upper bound ${\tau_\pi}$ on its execution time.
\end{mydef}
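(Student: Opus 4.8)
The plan is to reduce the statement directly to the Definition of unbiasedness given just above, using as inputs the two facts already granted in the text: that $\mu_n$ and $s_n$ are unbiased estimators of $\mu$ and $\sigma$ respectively. By that Definition I must show $E[\widehat{\tau_\pi}] = \tau_\pi$, and since equation~\eqref{eq:sample} gives $\widehat{\tau_\pi} = \mu_n + \gamma s_n$, the whole argument amounts to computing the expectation of this right-hand side.

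First I would record the modelling convention emphasised just before equation~\eqref{eq:sample}: $\gamma$ is computed once and thereafter held fixed, so it may be treated as a deterministic scalar rather than a sample-dependent random quantity. This makes $\mu_n + \gamma s_n$ an affine combination of the two statistics, so linearity of expectation applies cleanly and yields $E[\widehat{\tau_\pi}] = E[\mu_n] + \gamma\, E[s_n]$. Substituting the granted identities $E[\mu_n] = \mu$ and $E[s_n] = \sigma$ gives $E[\widehat{\tau_\pi}] = \mu + \gamma\sigma$, which by equation~\eqref{eq:pop} equals $\tau_\pi$. Invoking the Definition once more then closes the proof.

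I do not expect the calculation itself to pose any real difficulty, as it is essentially a one-line use of linearity; the substantive content is imported, namely the standard result that $\mu_n$ and $s_n$ are individually unbiased. The one step that genuinely requires care — and which I would state explicitly as a hypothesis — is the constancy of $\gamma$. If $\gamma$ were instead re-estimated from the same sample that produces $\mu_n$ and $s_n$, the estimator would become a product of correlated random quantities and linearity of expectation would no longer deliver the result directly. The proof is therefore valid precisely under the framework's stated convention that $\gamma$ is fixed in advance, so I would foreground that assumption at the very start.
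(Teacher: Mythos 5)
Your proposal matches the paper's own proof essentially line for line: both apply linearity of expectation to $\widehat{\tau_\pi}=\mu_n+\gamma s_n$, pull out $\gamma$ as a constant, substitute the granted facts $E[\mu_n]=\mu$ and $E[s_n]=\sigma$, and conclude via equation~\eqref{eq:pop}. Your explicit flagging of the assumption that $\gamma$ is held fixed (rather than re-estimated from the same sample) is a worthwhile clarification the paper leaves implicit, but it does not change the argument.
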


\begin{proof}
$E[\widehat{\tau_\pi}]=E[\mu_{n} + \gamma s_{n}]$\\
$=E[\mu_n]+E[\gamma s_n]$\\
$=E[\mu_n]+\gamma E[s_n]$\\
$=\mu + \gamma \sigma$ \\
$=\tau_\pi$  using equation~\eqref{eq:pop} \qedhere
\end{proof}

Algorithm~\ref{alg:alg3} describes the process for computing $\gamma$. It takes the execution time data and $\pi$ as inputs and returns $\gamma$. The algorithm sorts and bins the execution times, where the number of bins is the square root of the total data count and calculates the upper limit for the first bin (line 12). Then it computes the probability of the data falling into the first bin  (lines 13-20). If this probability exceeds (or is equal to) $\pi$, $\widehat{\tau_\pi}$ is set to the upper limit of the first bin. Otherwise, it computes the next bin's upper limit (line 25) and the process repeats until $\pi$ is reached. Ultimately, the algorithm returns $\gamma$ computed using equation~\eqref{eq:sample}.

\begin{algorithm}[htb]
\caption{Empirical Cdf}
\label{alg:alg3}

\begin{algorithmic}[1]
 \Procedure{EmpiricalCdf}{\textit{execution\_times}, $\pi$}: $\gamma$
 \small\State $\mathit{sum, y \gets}$ 0, 0
 \small\State $\mathit{buffer\_size \gets\textsc{size}(execution\_times)}$\\
 \small\hspace{0.5cm}$\mathit{\textsc{sort}(execution\_times)} $
 \small\State $\mathit{min \gets execution\_times[}$0$\mathit{]}$
 \small\State $\mathit{max \gets execution\_times[buffer\_size-}$1$\mathit{]}$
 \small\State $\mu_n\mathit{\gets \textsc{mean}(execution\_times)}$
 \small\State $s_n\mathit{\gets \textsc{std_dev}(execution\_times)}$
 \small\State $\mathit{upper\_bound \gets max}$
 \small\State $\mathit{number\_of\_bins \gets \textsc{sqrt}(buffer\_size)}$
 \State $\mathit{bin\_width \gets \frac{(max-min)}{number\_of\_bins}}$
 \small\State $\mathit{upper\_limit\_of\_bin \gets (execution\_times[}$0$\mathit{]+bin\_width)}$
 \small\While {$\mathit{upper\_limit\_of\_bin \leq max}$}
 \small\While{$\mathit{y \leq buffer\_size}$}
 \small\If{$\mathit{execution\_times[y] \leq upper\_limit\_of\_bin}$}
 \small\State $\mathit{sum \gets sum +}$1$\mathit{} $ \label{op2}
 \small\EndIf
 \small\State $\mathit{y \gets y+}$1
 \small\EndWhile
 \State $\mathit{probability \gets \frac{sum}{buffer\_size}}$
 \small\If{$\mathit{probability \geq} \pi$}
 \small\State $\widehat{\tau_\pi} \mathit{\gets upper\_limit\_of\_bin}$ 
 \small\State \textbf{break}
 \small\EndIf
 \small\State $\mathit{upper\_limit\_of\_bin \gets upper\_limit\_of\_bin+bin\_width}$
 \small\State $\mathit{sum, y \gets}$ 0, 0
 \small\EndWhile
 \State $\gamma \gets \frac{\widehat{\tau_\pi}-\mu_n}{s_n}$\\
 \small\hspace{0.5cm}\textbf{return} $\gamma$
 \EndProcedure
\end{algorithmic}
\end{algorithm}
Once $\gamma$ (based on the data from the first $n$ cycles) is obtained, we can specify contracts like ``Execution time of function block $X$ should not exceed $\gamma$ standard deviation from the mean of $X$" from the $\mathit{(n+1)^{th}}$ cycle onward. 
The parameter $\gamma$ is a characteristic property of a particular function block. This technique of computing a function block specific parameter $\gamma$ and using it for specifying the real-time contracts eliminates the problems of using the same probability density function to model the execution times of all function blocks. 

As the function blocks continue to execute beyond the first $n$ cycles, the accuracy of the upper bound may deteriorate due to several reasons. First, a function block may take longer time to execute due to platform related reasons such as waiting to acquire a resource and this might affect its execution time. Second, the execution time depends on the path of execution followed by a function block: in a certain cycle, it may have to handle an exception which might take longer than usual. Third, the execution time of a function block depends on interactions with the other function blocks and if the other function blocks have a delay, then that will affect the execution time of all the interacting function blocks. To account for these factors, it is important to dynamically update the computed upper bound from time to time. Since all computations are performed at runtime, running Algorithm~\ref{alg:alg3} too many times during the execution of the control application would degrade the application's performance. To prevent this, we only re-estimate $\mu_n$ and $s_n$ at runtime and update $\widehat{\tau_\pi}$ using the new value of $\mu_n$ and $s_n$ and the previously computed value of $\gamma$ (based on the first $n$ cycles). For updating the estimates, we use a sliding window mechanism: after a certain number $h$ of cycles, the oldest $h$ values of the execution time of a function block are replaced by the newest $h$ values and $\mu_n$ and $s_n$ are recomputed.

\section{Implementation}
We implemented the contract framework as a C++ library. Contracts are checked dynamically. When a contract fails, a message is logged. The logging is done by the FASA scheduler during the time left in each cycle after all function blocks are executed. This time is called the \textit{slack time}. If there is no slack time remaining in a cycle, the messages are stored in a buffer and carried on to the next cycle, where they are logged before the cycle's own messages. The contracts are specified within the function blocks in ``contract methods". This can be seen in the examples in Figure~\ref{fig:funct_sens} and Figure~\ref{fig:rt_sens}. To test the performance of our framework, we fixed an upper bound on the execution time overhead added due to contracts as 10\%, consistent with standard practices~\cite{Reference47}.  

\section{Validation and Results}
\label{Section6}
We evaluated our framework on five industrial case studies developed at ABB Corporate Research. The applications were tested on MacMini computers with 4 GB RAM and quad code processors, each core running at 1.8 GHz. The operating system was 64 bit Ubuntu 13.04.
The results are shown for 10000 executions of each application. For the real-time contracts, the cycle time was taken as 10ms for the first four case studies and 100ms for the fifth case study in order to handle network communication delays and the jitter margin was set to 0.1ms for all case studies. The entire list of contracts for each application is available in \cite{EPFL-STUDENT-203686}.
\begin{table*}[!htpb]
\caption[Performance summary of case studies]{Performance summary for five case studies}
\label{tab:perf_sum}
\centering
\begin{tabular}{c|c|c|c|c} \hline
Case Study & Number of contracts & \multicolumn{2}{|c|}{execution time} & overhead  \\ \cline{3-4}
&  & without contracts & with contracts &   \\ \hline
Simple Counter & 9 & 0.5447 & 0.5574 & 2.33\% \\\hline
Gaussian Generator & 27 & 1.4202  & 1.5161 & 6.75\% \\ \hline
Energy Pack Core Model$^{*}$ & 36 &  1.7467 & 1.8657 & 6.81\% \\ \hline
Binary Search$^{*}$ & 43 & 1.5865 & 1.6761 & 5.65\% \\ \hline
Net Proxy: sender & 10 & 0.4746 & 0.6359 & 33.99\% \\ \hline
Net Proxy: receiver & 10 & 0.8016 & 1.1531 & 43.85\% \\ \hline
\multicolumn{4}{c}{\small{*median execution time is computed due to outliers in the data.}}
  \end{tabular}
\end{table*}

\textbf{Simple Counter.} There is a single function block. It does not have any ports. An integer variable \texttt{counter} is initialized to 1 in the block constructor and is incremented in every cycle up to 10.  
For functional contracts, the framework checked a loop variant, a loop invariant and a postcondition to ensure the correct incrementation of \texttt{counter} up to 10 in every cycle. For the stochastic contracts, the value of the threshold probability $\pi$ was set to $0.95$ and the sliding window updating interval $h$ was set to 1.  

\textbf{Gaussian Generator.} The application generates random numbers having a Gaussian distribution. A Random Generator block generates two random numbers according to Uniform Distribution, $U[0,1]$. It then sends the two values to a Gaussian Generator block which generates two standard normal, $N(0,1)$ random values using the Box-Muller transformation\footnote{http://en.wikipedia.org/wiki/Box-Muller_transform}. These two values are sent to a Range Calculator which computes the range of the two Gaussian random numbers. The framework checked functional preconditions for verifying the connectivity of the ports for every function block before they started transmitting data. For the Random Generator, it checked postconditions to ensure that the generated random numbers lie within $[0,1]$, as required by the Gaussian Generator. For the Range Calculator, a postcondition ensured that the computed range is $\geq 0$.
The values of $\pi$ and $h$ were set to $0.97$ and 10 respectively for the stochastic properties. 

\textbf{Energy Pack Core Model.} This is our running example, shown in Figure~\ref{fig:runningeg}. The application is described in section~\ref{subsec:mtvtn}. The functional contracts included postconditions for monitoring the states of variables and attibutes, preconditions for ensuring the connectivity of ports and class invariants. For this application, $\pi$ was set to $0.95$ and $h$ to 5.  

\textbf{Binary Search.} A Random Integer Generator generates a random integer which is sent to an Array Creator. The latter stores the integer in a dynamic array. It sends the array to a Sorter to be sorted in ascending order. The sorted array is sent to a Binary Search block. This block has second input port where it receives a random number from the Random Integer Generator and searches for this random number in the sorted array that it receives from the Sorter and shows the index of the number if it is found, otherwise it prints $-1$ as the index. After every $10^{th}$ cycle, the memory allocated to the array is cleared. Thus, the maximum size of the array is $10$.  
The functional contracts checked preconditions for the connectivity of the ports and loop variants and invariants to ensure the total correctness of the loops in the Array Creator, the Sorter and the Binary Search block. For the execution time related contracts, $\pi$ was set to $0.98$ and $h$ to $5$.
 
\textbf{Net Proxy.} There are two separate applications launched on two separate host controllers. The first application has  a Sender and a Net Proxy Send block. The latter transmits integer values from the Sender through the network. The second application has a Net Proxy Receive block and a Receiver. The Net Proxy Receive block receives the integer value from the network and sends it to the Receiver. In order to synchronize the clocks on the two hosts, the Precision Time Protocol (PTP) is used. 
For the Sender, the functional contracts included a precondition to check the connectivity of the output port, a postcondition to monitor the state of an attribute and a class invariant. For the Receiver, a precondition checked the connectivity of the input port and a postcondition monitored the state of an attribute. $\pi$ was set to $0.99$ and $h$ to 10 for both applications. A real-time postcondition checked the compeletion time of the Receiver with respect to the Sender in a contract like ``Execution of Receiver must terminate within 0.2ms from the start of the execution of the Sender", in order to account for the network communication delays. 

Table~\ref{tab:perf_sum} shows the total number of the contracts, the mean execution times in ms for the five case studies with and without contracts enabled and the execution time overheads. The number of contracts include all functional and real-time contracts for every function block in the respective application~\cite{EPFL-STUDENT-203686}. The execution time overhead for the first four applications is well under our 10\% limit. For the fifth case study, the overhead was more due to delays in the network communication (some of the real-time contracts for the \textit{Receiver} relied on data obtained through the network). 

\section{Related Work}
\label{Section2}
\cite{Reference32} and \cite{Reference30} describe use of contracts for verification of programs in Ada. In \cite{Reference56} a tool has been developed for the BIP~\cite{Reference54} component framework which checks a program against programmer written specifications at runtime. \cite{Reference64} introduces a framework  for specifying contracts in temporal logic. The use of contracts for distributed embedded system design is shown in \cite{Reference23}. None of the above work supports real-time contracts for reasoning about metric time. 
\begin{table}[!htbp]
\scriptsize
\caption[Feature analysis of contract frameworks]{Characteristics of different contract frameworks}
\label{tab:lit_ana}
\centering
\tabcolsep=0.11cm
\begin{tabular}{p{3cm}p{1.5cm}p{1.5cm}p{1.5cm}}
%\cline{2-7}
\multicolumn{1}{p{2cm}}{} & {functional} & temporal & stochastic updates \\\hline
AdaCore\cite{Reference32} & \checkmark & $\times$ & $\times$\\\hline
Hi-Lite\cite{Reference30} & \checkmark & $\times$ & $\times$\\\hline
Barbacci et al.\cite{Reference8} & \checkmark & \checkmark & $\times$\\\hline
H\"{a}rtig et al.\cite{Reference24} & \checkmark & \checkmark& $\times$\\\hline
Stierand et al.\cite{Reference23} & $\times$ & \checkmark  & $\times$\\\hline
Sangiovanni-Vincentelli et al.\cite{Reference25} & \checkmark & $\times$  & $\times$\\\hline
RV-BIP\cite{Reference56} & \checkmark & $\times$ & $\times$ \\\hline
OCRA\cite{Reference64}& \checkmark & $\times$ &  $\times$\\\hline
Sojka et al.\cite{Reference28} & $\times$ & $\times$ & $\times$ \\\hline
\textbf{Stochastic Contracts} & \checkmark & \checkmark & \checkmark\\\hline
\end{tabular}
\end{table}
Closest to our idea are the works \cite{Reference24} and \cite{Reference8} in the sense that both focus on specification of functional and timed properties for real-time systems, but, unlike their tools, our tool does not rely only on statically defined contracts. Instead, it updates the contracts at runtime to ensure that they remain meaningful throughout the execution of an application, taking into consideration the changes in the underlying platform related factors. 
\cite{Reference28} and \cite{Reference25} show contracts in multiple layers of real-time systems. The similarity with our work lies in the layered structure of the contracts. In our framework, we specify real-time contracts in two hierarchical layers: function block level (WCET related) and application level (cycle time and jitter related). \cite{Reference25} only supports functional properties while \cite{Reference28} supports properties related to resource reservation. 

To our knowledge, our framework is the first to make use of statistical techniques for dynamically estimating and updating real-time contracts related to execution times of an application's components. 
Table~\ref{tab:lit_ana} summarizes the characteristic features of the existing contract frameworks which are relevant for our research and the characteristics of our own contract framework (the final row). 

\section{Conclusions and Future Work}
\label{Section7}
We presented a statistical inference based approach for computing real-time contracts for component-based real-time control applications. 
Based on experiments with industrial control applications deployed on a single controller (without network communication), we demonstrated that our contract framework adds an acceptable execution time overhead (much less than 10\%).

We used statistical inference for estimating upper bounds on the execution times of the function blocks. A possible extension would be to use machine learning techniques such as neural networks and compare the quality of the obtained estimates from the two approaches. The challenge in the latter approach lies in limiting the overhead added by the heavy computations at runtime, either by performing the learning offline or by other optimization techniques. 

\section{Acknowledgements}
We would like to thank Viktor Kuncak for his guidance and Carlo A.\ Furia, Chris Poskitt and Nadia Polikarpova for their feedback on the drafts of the paper. The first author is currently funded by ERC grant no.\ 291389.
%
% The following two commands are all you need in the
% initial runs of your .tex file to
% produce the bibliography for the citations in your paper.
\bibliographystyle{abbrv}
\bibliography{bibliography}  % sigproc.bib is the name of the Bibliography in this case

\begin{thebibliography}{10}

\bibitem{Reference57}
R.~Alur.
\newblock Techniques for automatic verification of real-time systems.
\newblock PhD thesis, Stanford University, 1991.

\bibitem{Reference8}
M.~Barbacci and J.~M. Wing.
\newblock Specifying functional and timing behavior for real-time applications.
\newblock In {\em Proc. Parallel Architectures and Languages Europe, Volume I},
  pages 124--140, London, UK, UK, 1987. Springer-Verlag.

\bibitem{Reference54}
A.~Basu, L.~Mounier, M.~Poulhi\`es, J.~Pulou, and J.~Sifakis.
\newblock Using bip for modeling and verification of networked systems -- a
  case study on tinyos-based networks.
\newblock IEEE NCA'07, pages 257--260.

\bibitem{Reference67}
S.~S. Bauer, A.~David, R.~Hennicker, K.~Guldstrand~Larsen, A.~Legay, U.~Nyman,
  and A.~W\k{a}sowski.
\newblock Moving from specifications to contracts in component-based design.
\newblock FASE'12, pages 43--58, Berlin, Heidelberg. Springer-Verlag.

\bibitem{Reference64}
A.~Cimatti, M.~Dorigatti, and S.~Tonetta.
\newblock Ocra: A tool for checking the refinement of temporal contracts.
\newblock IEEE ASE'13, pages 702--705.

\bibitem{Reference30}
Y.~M. Cyrille~Comar, Johannes~Kanig.
\newblock Integrating formal program verification with tesing.
\newblock 2014 \burl{http://www.adacore.com/uploads_gems/Hi-Lite_ERTS-2012.pdf}.

\bibitem{Reference56}
Y.~Falcone, M.~Jaber, T.-H. Nguyen, M.~Bozga, and S.~Bensalem.
\newblock Runtime verification of component-based systems.
\newblock volume 7041 of {\em LNCS}, pages 204--220. Springer Berlin
  Heidelberg.
\newblock In SEFM'11.

\bibitem{Reference59}
G.~G\"{o}ssler, S.~Graf, M.~Majster-Cederbaum, M.~Martens, and J.~Sifakis.
\newblock An approach to modelling and verification of component based systems.
\newblock volume 4362 of {\em LNCS}, pages 295--308. Springer Berlin
  Heidelberg.
\newblock In SOFSEM 2007.

\bibitem{Reference62}
Z.~Gu and K.~G. Shin.
\newblock Model-checking of component-based event-driven real-time embedded
  software.
\newblock IEEE ISORC'05, pages 410--417.

\bibitem{Reference24}
H.~H\"{a}rtig, S.~Zschaler, M.~Pohlack, R.~Aigner, S.~G\"{o}bel, C.~Pohl, and
  S.~R\"{o}ttger.
\newblock Enforceable component-based realtime contracts.
\newblock {\em Real-Time Syst.}, 35(1):1--31, Jan. 2007.

\bibitem{hogg2005introduction}
R.~Hogg and C.~A. McKean, J.W.
\newblock {\em Introduction to Mathematical Statistics}.
\newblock Pearson Education, 2005.

\bibitem{Reference63}
P.-A. Hsiung, W.-B. See, T.-Y. Lee, J.-M. Fu, and S.-J. Chen.
\newblock Formal verification of embedded real-time software in component-based
  application frameworks.
\newblock APSEC'01, pages 71--78.

\bibitem{Reference47}
X.~Huang, J.~Seyster, S.~Callanan, K.~Dixit, R.~Grosu, S.~A. Smolka, S.~D.
  Stoller, and E.~Zadok.
\newblock Software monitoring with controllable overhead.
\newblock {\em Int. J. Softw. Tools Technol. Transf.}, 14(3):327--347, June
  2012.

\bibitem{Reference68}
R.~Koymans.
\newblock Specifying real-time properties with metric temporal logic.
\newblock {\em Real-Time Systems}, 2(4):255--299, 1990.

\bibitem{Reference53}
G.~Lipari, P.~Gai, M.~Trimarchi, G.~Guidi, and P.~Ancilotti.
\newblock A hierarchical framework for component-based real-time systems.
\newblock {\em Electronic Notes in Theoretical Computer Science}, 116(0):253 --
  266, 2005.
\newblock TACoS 2004.

\bibitem{Reference58}
G.~Macariu and V.~Cretu.
\newblock Timed automata model for component-based real-time systems.
\newblock IEEE ECBS 2010, pages 121--130.

\bibitem{Reference61}
G.~Madl, S.~Abdelwahed, and G.~Karsai.
\newblock Automatic verification of component-based real-time corba
  applications.
\newblock IEEE RTSS'04, pages 231--240.

\bibitem{Reference6}
B.~Meyer.
\newblock {\em Object-Oriented Software Construction, 2nd Edition}.
\newblock Prentice-Hall, 1997.

\bibitem{EPFL-STUDENT-203686}
C.~Nandi.
\newblock {C}ontracts for {R}eal-{T}ime, {S}afety {C}ritical {S}ystems, 2014.
\newblock M.Sc thesis, EPFL, Switzerland.
  \burl{http://infoscience.epfl.ch/record/203686?ln=en}.

\bibitem{Reference1}
M.~Oriol, M.~Wahler, R.~Steiger, S.~Stoeter, E.~Vardar, H.~Koziolek, and
  A.~Kumar.
\newblock Fasa: A scalable software framework for distributed control systems.
\newblock ACM ISARCS'12, pages 51--60, New York, NY, USA.

\bibitem{Reference46}
R.~Perrone, R.~Macedo, G.~Lima, and V.~Lima.
\newblock An approach for estimating execution time probability distributions
  of component-based real-time systems.
\newblock {\em J. UCS}, 15(11):2142--2165, 2009.

\bibitem{Reference66}
S.~Quinton and S.~Graf.
\newblock Contract-based verification of hierarchical systems of components.
\newblock SEFM'08, pages 377--381.

\bibitem{Reference55}
S.~Richter, M.~Wahler, and A.~Kumar.
\newblock A framework for component-based real-time control applications.
\newblock In {\em 13th Real-Time Linux Workshop, Prague, Czech Republic}, 2011.

\bibitem{Reference25}
A.~Sangiovanni-Vincentelli, W.~Damm, and R.~Passerone.
\newblock Taming dr. frankenstein: Contract-based design for cyber-physical
  systems*.
\newblock {\em European Journal of Control}, 18(3):217 -- 238, 2012.

\bibitem{Reference32}
E.~Schonberg.
\newblock Towards {A}da 2012: An interim report.
\newblock ACM SIGAda '10, pages 63--70, New York, NY, USA.

\bibitem{Reference28}
M.~Sojka and Z.~Hanzalek.
\newblock Modular architecture for real-time contract-based framework.
\newblock IEEE SIES '09, pages 66--69.

\bibitem{Reference23}
I.~Stierand, P.~Reinkemeier, T.~Gezgin, and P.~Bhaduri.
\newblock Real-time scheduling interfaces and contracts for the design of
  distributed embedded systems.
\newblock SIES'13, pages 130--139.

\end{thebibliography}
%\input{paper.bbl}
% You must have a proper ".bib" file
%  and remember to run:
% latex bibtex latex latex
% to resolve all references
%
% ACM needs 'a single self-contained file'!
%

\end{document}